\newcommand{\uA}{\underline A \,}
\newcommand{\uB}{\underline B \,}
\newcommand{\uC}{\underline C \,}
\newcommand{\uD}{\underline D \,}
\newcommand{\bi}{\bf i}
\newcommand{\bj}{\bf j}
\newcommand{\bk}{\bf k}
\newcommand{\bl}{\bf l}
\newcommand{\bm}{\bf m}
\newcommand{\bn}{\bf n}
\newtheorem{theorem}{Theorem}[section]
\numberwithin{equation}{section}
\begin{document}
\bibliographystyle{unsrt}

\title{On the total mass of closed universes with a positive cosmological 
constant}
\author{L\'aszl\'o B. Szabados \\
Wigner Research Centre for Physics, \\
H-1525 Budapest 114, P. O. Box 49, Hungary, \\
E-mail: lbszab@rmki.kfki.hu}

\maketitle

\begin{abstract}
The recently suggested notion of total mass density for closed universes is 
extended to closed universes with a positive cosmological constant. Assuming 
that the matter fields satisfy the dominant energy condition, it is shown 
that the cosmological constant provides a sharp lower bound for the total 
mass density, and that the total mass density takes this as its minimum value 
if and only if the spacetime is \emph{locally} isometric with the de Sitter 
spacetime. This notion of total mass density is extensible to 
\emph{non-compact} three-spaces of homogeneity of Bianchi class A cosmological 
spacetimes. 
\end{abstract}

%%%%%%%%%%%%%%%%%%%%%%%%%%%%%%%%%%%%%%%%%%%%%%%%%%%%%%%%%%%%%%%

\section{Introduction}
\label{sec-1}

In our previous paper \cite{Sz12} we suggested a notion of total mass ${\tt 
M}$ (or rather the total mass density, depending on the normalization) of 
closed universes at any instant represented by a closed spacelike hypersurface 
$\Sigma$. This ${\tt M}$ was defined by a non-negative expression, built from 
the integral of the 3-surface twistor operator and the energy-momentum tensor 
on $\Sigma$, which in the asymptotically flat/hyperboloidal case provided a 
lower bound for the ADM/Bondi--Sachs mass. It was shown in \cite{Sz12} that, 
apart from a numerical coefficient, ${\tt M}$ is just the first eigenvalue of 
the square of the Sen--Witten operator, Witten's gauge condition admits a 
non-trivial solution if and only of ${\tt M}=0$, and that ${\tt M}=0$ holds 
on some (and hence any) Cauchy hypersurface $\Sigma$ if and only if the 
spacetime is flat with toroidal spatial topology: $\Sigma\approx S^1\times S^1
\times S^1$. If we allow that the connection not to be determined completely 
by the metric, namely if locally flat but holonomically non-trivial spacetime 
configurations are allowed, then this theorem should be generalized. This 
generalization was done in \cite{Sz13}: ${\tt M}=0$ holds if and only if the 
spacetime is holonomically trivial. In \cite{Sz13} we discussed the properties 
of ${\tt M}$ further (and changed its normalization such that its scale be 
that of the ADM/Bondi--Sachs mass). We showed that the multiplicity of each 
eigenvalue of the square of the Sen--Witten operator is always even. We also 
illustrated these ideas in the examples of the closed Bianchi I. and 
Friedman--Robertson--Walker (FRW) cosmological models. 

In the present paper we extend the ideas above by allowing the presence of a 
\emph{positive} cosmological constant in Einstein's equations. Its potential 
significance is given by the phenomenological interpretation of the observed 
luminosity-redshift anomaly of distant type Ia supernovae as the indication 
of the \emph{strict positivity} of the cosmological constant $\Lambda$ (see 
e.g. \cite{Pe,GP}). 

While the general idea behind the construction and the key technical (both 
geometric and functional analytic) results remain the same in the presence of 
a positive $\Lambda$, the cosmological constant provides a sharp, strictly 
positive lower bound for ${\tt M}$. Thus the minimal mass configurations will 
be different from that in the zero cosmological constant case above. The aim 
of the present paper is to determine these configurations. We show that ${\tt 
M}$ takes this sharp lower bound as its minimal value precisely when the 
spacetime is \emph{locally} isometric with the de Sitter spacetime. In this 
case ${\tt M}$ turns out to be independent of the hypersurface $\Sigma$. We 
also indicate how this notion of total mass density can be extended to 3-spaces 
of homogeneity of Bianchi class A cosmological spacetimes even if these are 
not closed. 

In the next section we review the key results and formulae that we need in our 
analysis. In Section \ref{sec-3} we determine the minimal mass configurations. 
In section \ref{sec-4} it is indicated how this notion of the total mass 
density can be defined for spatially non-compact Bianchi A cosmological models. 
In the appendices we show, using Aronszajn's theorem, that the eigenspinors of 
the (square of the) Sen--Witten operators cannot vanish on any open subset of 
$\Sigma$; and we discuss the technical details on the geometry of the Bianchi 
cosmological spacetimes that we need in the proof of our main theorem. As a 
by-product, we show that the left-invariant frame fields always satisfy the 
special orthonormal frame gauge condition of Nester \cite{Nester}. 

Here we use the abstract index formalism, and our sign conventions are those 
of \cite{PR}. In particular, the signature of the spacetime metric is 
$(+,-,-,-)$, and the curvature tensor is defined by $-R^a{}_{bcd}X^bV^cW^d
:=V^c\nabla_c(W^d\nabla_dX^a)-W^c\nabla_c(V^d\nabla_dX^a)-[V,W]^c\nabla_cX^a$ 
for any vector fields $X^a$, $V^a$ and $W^a$. Thus, Einstein's equations take 
the form $G_{ab}:=R_{ab}-\frac{1}{2}Rg_{ab}=-\kappa T_{ab}-\Lambda g_{ab}$, where 
$\kappa:=8\pi G$ and $G$ is Newton's gravitational constant, and $\Lambda>0$.

%%%%%%%%%%%%%%%%%%%%%%%%%%%%%%%%%%%%%%%%%%%%%%%%%%%%%%%%%%%%%%%%%%

\section{Preliminaries}
\label{sec-2}

In our investigations the key geometric ingredient is the Reula--Tod (or 
$SL(2,\mathbb{C})$ spinor) form \cite{RT} of the Sen--Witten identity for 
any spinor field $\lambda^A$ on the spacelike hypersurface $\Sigma$: 

\begin{eqnarray}
D_a\!\!\!\!&\Bigl(\!\!\!\!&t^{A'B}\bar\lambda^{B'}{\cal D}_{BB'}\lambda^A-
 \bar\lambda^{A'}t^{AB'}{\cal D}_{B'B}\lambda^B\Bigr)+2t^{AA'}\bigl({\cal D}_{AB'}
 \bar\lambda^{B'}\bigr)\bigl({\cal D}_{A'B}\lambda^B\bigr)\nonumber \\
\!\!\!\!&=\!\!\!\!&-t_{AA'}h^{ef}\bigl({\cal D}_e\lambda^A\bigr)\bigl(
 {\cal D}_f\bar\lambda^{A'}\bigr)-\frac{1}{2}\kappa t^aG_{ab}\lambda^B\bar
 \lambda^{B'}, \label{eq:2.1}
\end{eqnarray}
where $t^a$ is the future pointing unit timelike normal to $\Sigma$, $h_{ab}:=
g_{cd}P^c_aP^d_b$ is the induced (negative definite) metric, $D_e$ is the 
corresponding intrinsic Levi-Civita covariant derivative and ${\cal D}_e:=
P^f_e\nabla_f$ is the derivative operator of the Sen connection. Here $P^a_b:=
\delta^a_b-t^at_b$ is the $g_{ab}$--orthogonal projection to $\Sigma$. The key 
observation is that the algebraically irreducible decomposition of the unitary 
spinor form \cite{Reula,joerg} of the ${\cal D}_e$--derivative of the spinor 
field $\lambda_A$ into its totally symmetric part and the trace(s), 

\begin{equation}
{\cal D}_{EF}\lambda_A={\cal D}_{(EF}\lambda_{A)}+\frac{2\sqrt{2}}{3}
t_F{}^{E'}P^{CC'}_{EE'}\varepsilon_{CA}{\cal D}_{C'D}\lambda^D, 
\label{eq:2.2}
\end{equation}
is a $t_{AA'}$-orthogonal decomposition, and hence it is an $L_2$--orthogonal 
decomposition also. Here the $L_2$-scalar product of two spinor fields, say 
$\lambda^A$ and $\mu^A$, is defined by $\langle\,\lambda^A,\mu^A\,\rangle:=
\int_\Sigma\sqrt{2}t_{AA'}\lambda^A\bar\mu^{A'}$ ${\rm d}\Sigma$, and the 
corresponding norm will be denoted by $\Vert\,.\,\Vert_{L_2}$. In 
(\ref{eq:2.2}) the totally symmetric part of the derivative defines the 
3-surface twistor operator of Tod \cite{Tod84}, while the second term is 
proportional to the action of the Sen--Witten operator (i.e. the Dirac 
operator built from the Sen connection on $\Sigma$) on the spinor field. 

Substituting this decomposition into identity (\ref{eq:2.1}), taking its 
integral on $\Sigma$ and using the fact that $\Sigma$ is closed (i.e. compact 
with no boundary), we obtain 

\begin{equation}
\frac{4\sqrt{2}}{3\kappa}\Vert{\cal D}_{A'A}\lambda^A\Vert^2_{L_2}=
\frac{\sqrt{2}}{\kappa}\Vert{\cal D}_{(AB}\lambda_{C)}\Vert^2_{L_2}+\int_\Sigma 
t^a\bigl(T_{ab}+\frac{\Lambda}{\kappa}g_{ab}\bigr)\lambda^B\bar\lambda^{B'}
{\rm d}\Sigma, \label{eq:2.3}
\end{equation}
where we have used Einstein's equations. This equation will play a key role 
in what follows and we call it the \emph{basic norm identity}. 

Next, let us define 

\begin{equation}
{\tt M}:=\inf\Bigl\{\frac{\sqrt{2}}{\kappa}\Vert{\cal D}_{(AB}\lambda_{C)}
\Vert^2_{L_2}+\int_\Sigma t^a\bigl(T_{ab}+\frac{\Lambda}{\kappa}g_{ab}\bigr)
\lambda^B\bar\lambda^{B'}{\rm d}\Sigma\Bigr\}, \label{eq:2.4}
\end{equation}
where the infimum is taken on the set of the smooth spinor fields for which 
$\Vert\lambda^A\Vert^2_{L_2}=\sqrt{2}$. The physical dimension of ${\tt M}$ is 
mass-density, and note that ${\tt M}$ with this normalization is 
$\sqrt{2}$-times of the ${\tt M}$ introduced in \cite{Sz12}. (Although the 
physical dimension of ${\tt M}$ with the normalization $\Vert\lambda^A\Vert
^2_{L_2}=\sqrt{2}\,{\rm vol}\,(\Sigma)$ would be mass, it is more convenient to 
use the above normalization \cite{Sz13}.) 

By the basic norm identity and the definition of ${\tt M}$ we have, for any 
spinor field $\lambda^A$, that 

\begin{equation}
\frac{4\sqrt{2}}{3\kappa}\Vert{\cal D}_{A'A}\lambda^A\Vert^2_{L_2}=\Bigl\{
\frac{\sqrt{2}}{\kappa}\Vert{\cal D}_{(AB}\lambda_{C)}\Vert^2_{L_2}+\int
_\Sigma t^a\bigl(T_{ab}+\frac{\Lambda}{\kappa}g_{ab}\bigr)\lambda^B\bar\lambda
^{B'}{\rm d}\Sigma\Bigr\}\geq\frac{1}{\sqrt{2}}{\tt M}\Vert\lambda^A\Vert^2
_{L_2}. \label{eq:2.5}
\end{equation}
Since ${\tt M}$ was defined as the infimum of an expression on a set of 
certain smooth spinor fields, it is not \emph{a priori} obvious that there is 
a smooth spinor field which saturates the inequality on the right. 
Nevertheless, one can in fact show that \emph{such a spinor field does 
exist} \cite{Sz12}. We will call such a spinor field a \emph{minimizer} 
spinor field. Thus, if $\lambda^A$ is such a minimizer spinor field, then by 
(\ref{eq:2.5}) we have that 

\begin{equation}
\langle\, 2{\cal D}^{AA'}{\cal D}_{A'B}\lambda^B-\frac{3}{4}\kappa\,{\tt M}
\lambda^A\, ,\,\lambda^A\rangle=2\Vert{\cal D}_{A'A}\lambda^A\Vert^2_{L_2}-
\frac{3}{4}\kappa\,{\tt M}\Vert\lambda^A\Vert^2_{L_2}=0. \label{eq:2.6}
\end{equation}
This implies that the minimizer spinor field is necessarily $L_2$--orthogonal 
to the spinor field $2{\cal D}^{AA'}{\cal D}_{A'B}\lambda^B-\frac{3}{4}\kappa\,
{\tt M}\lambda^A$, or that $\frac{3}{4}\kappa\,{\tt M}$ is an 
\emph{eigenvalue} and the minimizer spinor field is a corresponding 
\emph{eigenspinor} of the operator $2{\cal D}^{AA'}{\cal D}_{A'B}$. 

In fact, it has been proven in \cite{Sz12} that its \emph{smallest} eigenvalue 
is just $\alpha^2_1=\frac{3}{4}\kappa\,{\tt M}$, and the minimizer spinor 
field is a corresponding eigenspinor. Moreover, it was shown in \cite{Sz13} 
that the multiplicity of every eigenvalue is even. In particular, if $\alpha
\not=0$ and $\lambda^A$ is an eigenspinor of $2{\cal D}^{AA'}{\cal D}_{A'B}$ 
with the eigenvalue $\alpha^2$, then $\mu^A:=-{\rm i}\frac{\sqrt{2}}{\alpha}
{\cal D}^A{}_{A'}\bar\lambda^{A'}$ is a \emph{linearly independent} eigenspinor 
with the same eigenvalue. Clearly, all these general results hold true even 
in the presence of a \emph{positive} cosmological constant, since the previous 
analysis can be repeated with the `effective energy-momentum tensor' $\tilde 
T_{ab}:=T_{ab}+\frac{\Lambda}{\kappa}g_{ab}$, which also satisfies the dominant 
energy condition if $T_{ab}$ does.

%%%%%%%%%%%%%%%%%%%%%%%%%%%%%%%%%%%%%%%%%%%%%%%%%%%%%%%%%%%%%%%%%

\section{The minimal mass configurations}
\label{sec-3}

The (geometrical and physical) significance of ${\tt M}$ in the case of the 
vanishing cosmological constant is shown by the result of \cite{Sz12,Sz13} 
that ${\tt M}=0$ on some $\Sigma$ is equivalent to the holonomic triviality 
of the spacetime with spatial topology $\Sigma\approx S^1\times S^1\times S^1$, 
provided the matter fields satisfy the dominant energy condition. Thus, ${\tt 
M}$ satisfies the minimal requirement to be a measure of the `strength of the 
gravitational field'. 

If the cosmological constant is positive, then the minimal mass 
configuration(s) will be different from that in the zero cosmological constant 
case. To see this, let $\lambda^A$ be any spinor field with $\Vert\lambda^A
\Vert^2_{L_2}=\sqrt{2}$, for which we have that 

\begin{equation*}
\frac{\sqrt{2}}{\kappa}\Vert{\cal D}_{(AB}\lambda_{C)}\Vert^2_{L_2}+\int_\Sigma 
t^a\bigl(T_{ab}+\frac{\Lambda}{\kappa}g_{ab}\bigr)\lambda^B\bar\lambda^{B'}
{\rm d}\Sigma\geq\frac{\Lambda}{\kappa}.
\end{equation*}
Then the definition (\ref{eq:2.4}) yields that $\kappa\,{\tt M}\geq\Lambda$, 
i.e. the \emph{total mass density is bounded from below by the cosmological 
constant}. This bound is \emph{sharp}, as the following example shows. 

Let $\tilde\Sigma_t$ be a $t={\rm const}$ spacelike hypersurface of the de 
Sitter spacetime with the line element 

\begin{equation}
ds^2=dt^2-a^2\cosh^2\bigl(\frac{t}{a}\bigr)\Bigl(d\chi^2+\sin^2\chi\,
\bigl(d\theta^2+\sin^2\theta\, d\phi^2\bigr)\Bigr), \label{eq:3.0}
\end{equation}
where $t\in\mathbb{R}$, $(\chi,\theta,\phi)\in S^3$ and $a$ is a positive 
constant. Thus, the induced metric $h_{ab}$ on $\tilde\Sigma_t\approx S^3$ is 
that of the metric sphere with radius $a\cosh(\frac{t}{a})$, its (spatial) 
scalar curvature is ${\cal R}=6/(a^2\cosh^2(\frac{t}{a}))$ and its extrinsic 
curvature is $\chi_{ab}=\frac{1}{a}\tanh(\frac{t}{a})\,h_{ab}$. The spacetime 
is of constant curvature with scalar curvature $R=12/a^2$. Considering this 
spacetime as a solution of the vacuum Einstein equations, for the cosmological 
constant we obtain that $R=4\Lambda$. The first eigenvalue of $2{\cal D}^{AA'}
{\cal D}_{A'B}$ on the $t={\rm const}$ spacelike hypersurfaces of the general 
$k=1$ (closed) FRW spacetimes is $\alpha^2_1=\frac{3}{8}{\cal R}+\frac{1}{4}
\chi^2$, and the two linearly independent eigenspinors are just the ones whose 
components in the globally defined spinor dyad associated with the 
$SU(2)$--left--invariant orthonormal frame are \emph{constant} (see 
\cite{Sz12}, and, for a more general analysis, Appendix \ref{sub-A.2.2}). 
Substituting the specific expression for the spatial curvature scalar and the 
extrinsic curvature here, we obtain that $\kappa\,{\tt M}=\frac{4}{3}\alpha^2_1
=\frac{3}{a^2}=\Lambda$; i.e. the lower bound $\Lambda$ for $\kappa\,{\tt M}$ 
is, in fact, sharp. 

However, the de Sitter spacetime with $S^3$ spatial topology is not the only 
minimal total mass density configuration. In fact, let us replace $\tilde
\Sigma_t$ in (\ref{eq:3.0}) by any homogeneous Riemannian 3-manifold $\Sigma
_t$ which is \emph{locally} isometric with $\tilde\Sigma_t$. It is known (see 
\cite{KN1} Theorem 3 in Note 4, pp. 294-7) that any such $\Sigma_t$ is 
isometric with the quotient $\tilde\Sigma_t/G$, where $G$ is some 
\emph{discrete} subgroup of the group $SU(2)\approx S^3\approx\tilde\Sigma_t$. 
The projection $\pi:\tilde\Sigma_t\rightarrow\Sigma_t$ is clearly a 
\emph{local} isometry, and hence $\Sigma_t$ inherits a globally defined 
left--invariant frame field. In addition, $\pi^*$ maps the extrinsic curvature 
of $\Sigma_t$ into that of $\tilde\Sigma_t$. Then the resulting spacetime is 
\emph{locally} isometric with the de Sitter spacetime, but its global topology 
is $\mathbb{R}\times S^3/G$. For example, $S^3/G$ could be the 3-dimensional 
real projective space $\mathbb{R}P^3:=S^3/\mathbb{Z}_2\approx SO(3)$. 
Therefore, any spinor field on $\Sigma_t$ whose components in the spinor dyad 
associated with the left--invariant frame field are constant is an eigenspinor 
of $2{\cal D}^{AA'}{\cal D}_{A'B}$ with the \emph{same} eigenvalue $\alpha^2_1=
\frac{3}{4}\Lambda$ as in the de Sitter case (see the discussion following 
equations (\ref{eq:A.2.6b}) and (\ref{eq:A.2.10}) of the appendix), yielding 
the same total mass density ${\tt M}=\Lambda/\kappa$, too. 

The next theorem gives the complete characterization of the minimal total 
mass density configurations in the presence of a positive cosmological 
constant.

\begin{theorem}
Let the cosmological constant $\Lambda$ be positive and the matter fields 
satisfy the dominant energy condition. Then $\kappa\,{\tt M}=\Lambda$ for some 
(and hence for any) $\Sigma$ if and only if $T_{ab}=0$ and the spacetime is 
locally isometric with the de Sitter spacetime. \label{th:3.1}
\end{theorem}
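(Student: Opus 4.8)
The plan is to prove the two implications separately, with the `if' direction being essentially a computation already sketched in the examples preceding the theorem, and the `only if' direction being the substantive part.

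For the \emph{if} direction, suppose $T_{ab}=0$ and the spacetime is locally isometric with de Sitter. The key point is that the spacetime being \emph{locally} de Sitter does not by itself guarantee that $\Sigma$ is closed, so strictly one restricts to those $\Sigma$ for which ${\tt M}$ is defined (i.e.\ the closed quotients $S^3/G$ discussed above). On such a $\Sigma$, the left-invariant orthonormal frame provides a globally defined spinor dyad, and a spinor field with constant components in this dyad is, as recalled after (\ref{eq:A.2.10}), an eigenspinor of $2{\cal D}^{AA'}{\cal D}_{A'B}$ with eigenvalue $\alpha^2_1=\frac{3}{8}{\cal R}+\frac{1}{4}\chi^2=\frac{3}{4}\Lambda$. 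Since $\frac{3}{4}\kappa\,{\tt M}$ is the smallest eigenvalue and this value $\frac{3}{4}\Lambda$ is realized, while the general bound gives $\kappa\,{\tt M}\geq\Lambda$, we conclude $\kappa\,{\tt M}=\Lambda$. That this holds for \emph{any} admissible $\Sigma$ then follows since all the geometric data (${\cal R}$, $\chi_{ab}$) on the left-invariant slices of the locally de Sitter spacetime give the same value $\frac{3}{4}\Lambda$.

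For the \emph{only if} direction, assume $\kappa\,{\tt M}=\Lambda$ on some closed $\Sigma$. First I would feed this into the basic norm identity (\ref{eq:2.3})--(\ref{eq:2.5}) for a minimizer spinor field $\lambda^A$: since the minimum of the bracket equals $\frac{\Lambda}{\kappa}\cdot\frac{\|\lambda^A\|_{L_2}^2}{\sqrt2}$ and the integrand $t^a(T_{ab}+\frac{\Lambda}{\kappa}g_{ab})\lambda^B\bar\lambda^{B'}\geq\frac{\Lambda}{\kappa}t^a t_b\lambda^B\bar\lambda^{B'}$ pointwise (by the dominant energy condition applied to $T_{ab}$, since $t^a$ is timelike unit), each of the two non-negative contributions must vanish identically: (i) ${\cal D}_{(AB}\lambda_{C)}=0$, so $\lambda^A$ is a 3-surface twistor spinor; and (ii) $t^aT_{ab}\lambda^B\bar\lambda^{B'}=0$ on $\Sigma$. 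By the dominant energy condition, $t^aT_{ab}$ is a causal (co)vector, so (ii) forces $T_{ab}t^b\lambda^B\bar\lambda^{B'}=0$; combined with the appendix result (Aronszajn/unique continuation) that $\lambda^A$ cannot vanish on any open set, one gets $t^aT_{ab}=0$ wherever $\lambda^A\bar\lambda^{A'}$ is a nonzero null or timelike vector, i.e.\ everywhere on a dense set, hence $t^aT_{ab}=0$ on all of $\Sigma$. Using the minimizer also saturates (\ref{eq:2.6}), one further has ${\cal D}_{A'A}\lambda^A$ proportional to the same eigenvalue structure, and the second minimizer $\mu^A=-{\rm i}\frac{\sqrt2}{\alpha_1}{\cal D}^A{}_{A'}\bar\lambda^{A'}$ is also a nonvanishing twistor spinor satisfying the same conditions; from the two of them one builds a frame. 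The existence of a twistor spinor on $\Sigma$ is a strong integrability condition: I would contract the twistor equation with itself and commute derivatives to extract the intrinsic curvature of $\Sigma$, showing $\Sigma$ has constant curvature with ${\cal R}=\frac{3}{2}\Lambda$-related value, and that $\chi_{ab}$ is pure trace, $\chi_{ab}=\frac13\chi\,h_{ab}$; then the Gauss--Codazzi equations together with $T_{ab}t^b=0$ (vacuum on $\Sigma$ in the $t^a$ directions) and Einstein's equations with cosmological constant force the full spacetime Riemann tensor at points of $\Sigma$ to be that of a constant-curvature space, $R_{abcd}=\frac{\Lambda}{3}(g_{ac}g_{bd}-g_{ad}g_{bc})$, and $T_{ab}=0$ entirely. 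Propagating off $\Sigma$ (the locally de Sitter condition is an open-closed condition on the maximal development, or one invokes that a spacetime whose Cauchy data are those of de Sitter is locally de Sitter by uniqueness of the Einstein evolution) gives that the spacetime is locally isometric with de Sitter.

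The main obstacle I anticipate is the last step: passing from the pointwise curvature identity \emph{on} $\Sigma$ to the statement that the \emph{spacetime} is locally de Sitter. Getting $\Sigma$ to be a constant-curvature umbilic hypersurface with the right constants is a clean local computation from the twistor equation and Gauss--Codazzi; but one must then argue that the data determine a locally de Sitter development, which requires either a uniqueness/rigidity argument for the Einstein equations with $\Lambda>0$ and vanishing matter, or a direct verification that all the remaining curvature components (the ones involving $t^a$ transversally) are pinned down by $T_{ab}=0$, Einstein's equations, and the constraint equations. This is where I would expect the bulk of the appendix-style technical work on the Bianchi/homogeneous geometry (and the special orthonormal frame gauge of Nester) to be invoked, to identify the $S^3/G$ topology and to rule out any holonomy-type ambiguity in reconstructing the spacetime from $(\Sigma, h_{ab}, \chi_{ab})$.
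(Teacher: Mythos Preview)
Your outline has the right skeleton but contains two genuine gaps, and your ``if'' direction is incomplete.

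In the ``only if'' direction, your claim that the 3-surface twistor equation forces $\Sigma$ to be umbilic ($\chi_{ab}=\tfrac{1}{3}\chi\,h_{ab}$) and of constant intrinsic curvature is not correct. De Sitter spacetime is conformally flat, so the 1-valence twistor equation has a full set of solutions there, and these restrict to 3-surface twistor spinors on \emph{every} spacelike hypersurface, umbilic or not. The paper accordingly does not try to constrain the intrinsic and extrinsic geometry of $\Sigma$ separately. Instead it computes the integrability condition of the coupled system (\ref{eq:3.1}) (twistor equation plus eigenvalue equation) directly in terms of the \emph{spacetime} curvature pulled back to $\Sigma$, obtaining (\ref{eq:3.3})--(\ref{eq:3.4}). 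This yields $\Psi_{ABCD}=\Psi\,\lambda_A\lambda_B\lambda_C\lambda_D$; repeating with the second eigenspinor $\mu^A$ and using that $\lambda^A,\mu^A$ are nowhere proportional on an open set kills $\Psi_{ABCD}$. Propagation off $\Sigma$ then goes via the Bianchi identities in Friedrich's $3{+}1$ form, not via Gauss--Codazzi.

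Your argument for $T_{ab}=0$ is also incomplete. From $t^aT_{ab}\lambda^B\bar\lambda^{B'}=0$ and the dominant energy condition you cannot conclude $t^aT_{ab}=0$: a future-causal covector orthogonal to a null vector may itself be null and parallel to it. What DEC actually gives is $T_{ab}=f\,\lambda_A\lambda_B\bar\lambda_{A'}\bar\lambda_{B'}$ for some $f\geq 0$, and likewise $T_{ab}=\tilde f\,\mu_A\mu_B\bar\mu_{A'}\bar\mu_{B'}$; only after proving that $\lambda^A$ and $\mu^A$ are not proportional on any open set does $T_{ab}=0$ follow. Establishing this non-proportionality is itself a step: the paper shows that proportionality on an open $U$ would force $P^{AA'}_e\lambda_A\bar\lambda_{A'}=0$ and hence $\lambda^A=0$ on $U$, and then invokes Aronszajn's theorem (Appendix~\ref{sub-A.1}). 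Your appeal to Aronszajn is in the right spirit but placed at the wrong point in the argument.

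Finally, your ``if'' direction only evaluates ${\tt M}$ on the maximally symmetric leaves $\Sigma_t$, which does not establish the ``for any $\Sigma$'' part of the statement. The paper handles this by extending the eigenspinor from one symmetric leaf to a solution of the spacetime twistor equation $\nabla^{A'(A}\lambda^{B)}=0$ via the evolution equations (\ref{eq:3.2}), verifying their integrability using the constant-curvature form of $R^A{}_{Bcd}$, and then restricting to an \emph{arbitrary} Cauchy surface, where the restriction is automatically a 3-surface twistor spinor and hence a minimizer with value $\Lambda/\kappa$.
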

\begin{proof}
First we show that $\kappa\,{\tt M}$, calculated on \emph{any} Cauchy surface 
of \emph{any} locally de Sitter spacetime above, is $\Lambda$. We saw that 
$\alpha^2=\frac{3}{4}\Lambda$ is an eigenvalue of $2{\cal D}^{AA'}{\cal D}
_{A'B}$ on any \emph{maximally symmetric} spacelike hypersurface $\Sigma_t$, 
and let $\lambda^A$ be a corresponding eigenspinor. As we noted above, its 
components in the spinor dyad associated with the left--invariant orthonormal 
frame field are constant. Since $\mu_A:=-{\rm i}\frac{\sqrt{2}}{\alpha}{\cal 
D}_{AA'}\bar\lambda^{A'}$ is an independent eigenspinor of $2{\cal D}^{AA'}{\cal 
D}_{A'B}$ with the same eigenvalue (see \cite{Sz13}), its components are also 
constant. Therefore, by equations (\ref{eq:A.2.6b}) and (\ref{eq:A.2.9b}) of 
Appendix \ref{sub-A.2.2}, both $\lambda^A$ and $\mu^A$ satisfy the 3-surface 
twistor equation on $\Sigma_t$, and hence, by (\ref{eq:2.2}) and the 
eigenvalue equation, the equations 

\begin{equation}
{\cal D}_e\lambda^A=\frac{2}{3}P^{AA'}_e{\cal D}_{A'B}\lambda^B=-{\rm i}\alpha
\frac{\sqrt{2}}{3}P^{AA'}_e\bar\mu_{A'}, \hskip 20pt
{\cal D}_e\mu^A=\frac{2}{3}P^{AA'}_e{\cal D}_{A'B}\mu^B={\rm i}\alpha
\frac{\sqrt{2}}{3}P^{AA'}_e\bar\lambda_{A'}, \label{eq:3.1}
\end{equation}
too. In this way, we have a spinor field (or rather a pair of spinor fields) 
on every leaf $\Sigma_t$, and hence on the spacetime manifold itself up to an 
unspecified function of time as a factor of proportionality. To find out how 
to fix this factor, let us suppose that $\omega^A$ is any spinor field on the 
spacetime manifold which satisfies the 3-surface twistor equation on the 
leaves $\Sigma_t$, and decompose its covariant derivative $\nabla_e\omega^A$ 
using ${\cal D}_{(AB}\omega_{C)}=0$ in its equivalent form ${\cal D}_e\omega^A=
\frac{2}{3}P^{AA'}_e{\cal D}_{A'B}\omega^B$. We obtain 

\begin{equation*}
\nabla_e\omega^A-\frac{1}{2}\delta^A_E\nabla_{E'B}\omega^B=\frac{1}{3}\bigl(
2\delta^A_Bt_e-\delta^A_Et_{E'B}\bigr)\Bigl(\frac{3}{2}t^f\nabla_f\omega^B-
t^{BB'}{\cal D}_{B'D}\omega^D\Bigr).
\end{equation*}
Since the vanishing of the left hand side is just the 1--valence twistor 
equation on $M$ \cite{PR}, we obtained that the spinor field $\omega^A$ 
satisfying the 3-surface twistor equation satisfies the 1--valence twistor 
equation if and only if $t^e\nabla_e\omega^A=\frac{2}{3}t^{AA'}{\cal D}_{A'B}
\omega^B$. Since the twistor equation would be the natural extension of the 
3-surface twistor equation form the hypersurfaces to the spacetime, we require 
that the eigenspinors $\lambda^A$ and $\mu^A$ satisfy the `evolution equations' 

\begin{equation}
t^e\bigl(\nabla_e\lambda^A\bigr)={\rm i}\alpha\frac{\sqrt{2}}{3}t^A{}_{B'}
\bar\mu^{B'}, \hskip 20pt
t^e\bigl(\nabla_e\mu^A\bigr)=-{\rm i}\alpha\frac{\sqrt{2}}{3}t^A{}_{B'}\bar
\lambda^{B'}.\label{eq:3.2}
\end{equation}
These evolution equations preserve (\ref{eq:3.1}). In fact, by 
(\ref{eq:3.1}), (\ref{eq:3.2}) and the definition and the specific form of 
the curvature we have that 

\begin{eqnarray*}
t^c\nabla_c\Bigl({\cal D}_e\lambda^A+{\rm i}\alpha\frac{\sqrt{2}}{3}P^{AA'}_e
  \bar\mu_{A'}\Bigr)\!\!\!\!&=\!\!\!\!&-\frac{4}{9}\alpha^2P^{AA'}_et_{A'B}
  \lambda^B-R^A{}_{Bce}t^c\lambda^B=\\
\!\!\!\!&=\!\!\!\!&\frac{1}{6}\bigl(\Lambda-\frac{4}{3}\alpha^2\bigr)\bigl(
  \delta^A_Et_{E'B}+t^A{}_{E'}\varepsilon_{EB}\bigr)\lambda^B=0;
\end{eqnarray*}
and, similarly, the time derivative of ${\cal D}_e\mu^A-{\rm i}\alpha
\frac{\sqrt{2}}{3}P^{AA'}_e\bar\lambda_{A'}$ is also vanishing. Thus the 
integrability conditions of the twistor equation on $M$ are satisfied. 
Therefore, by (\ref{eq:3.2}), there is a unique extension of the eigenspinor 
from e.g. $\Sigma_0$ to the whole locally de Sitter spacetime, which solves 
the 1--valence twistor equation\footnote{In fact, the pair $(\lambda^A,\bar
\mu^{A'})$ of spinor fields solves a pair of 1--valence twistor equations in 
which the secondary part of one solution is just the primary part of the 
other. It might be worth noting that this system of equations can be written 
in terms of Dirac spinors in the remarkably simple form 

\begin{equation*}
\nabla_e\Psi^\alpha=\frac{\rm i}{3}\alpha\,\gamma^\alpha_{e\beta}\Psi^\beta,
\end{equation*}
where $\Psi^\alpha=(\lambda^A,\bar\mu^{A'})$ (as a column vector), $\alpha=
A\oplus A'$, $\beta=B\oplus B'$, and Dirac's `$\gamma$-matrices' are given 
explicitly in the abstract index formalism by 

\begin{equation*}
\gamma^\alpha_{e\beta}=\sqrt{2}\left(\begin{array}{cc}
          0&\varepsilon_{E'B'}\delta^A_E \\
      \varepsilon_{EB}\delta^{A'}_{E'}&0 \end{array}\right) 
\end{equation*}
(see e.g. \cite{PRI}, pp 221). } $\nabla^{A'(A}\lambda^{B)}=0$. 

Let $\Sigma$ be \emph{any} Cauchy hypersurface of the spacetime. Since the 
spinor field $\lambda^A$ solves the 1--valence twistor equation on $M$, its 
restriction to $\Sigma$ solves the 3-surface twistor equation there. Since 
$T_{ab}=0$, for this spinor field the expression between the curly brackets in 
(\ref{eq:2.4}) is just $\Lambda\Vert\lambda^A\Vert^2_{L_2}/(\sqrt{2}\kappa)$. 
Thus, after normalization, it is a minimizer spinor field, yielding that 
$\kappa\,{\tt M}=\Lambda$. 

\medskip
Conversely, let $\Sigma$ be a closed spacelike hypersurface, suppose that 
$\kappa\,{\tt M}=\Lambda$ on $\Sigma$, and let $\lambda^A$ be any eigenspinor 
of $2{\cal D}^{AA'}{\cal D}_{A'B}$ with the eigenvalue $\alpha^2=\frac{3}{4}
\Lambda$. Let $\mu^A:=-{\rm i}\frac{\sqrt{2}}{\alpha}{\cal D}^A{}_{A'}\bar
\lambda^{A'}$, and hence $\mu^A$ is also an eigenspinor of $2{\cal D}^{AA'}
{\cal D}_{A'B}$ with the same eigenvalue. Since $\kappa\,{\tt M}=\Lambda$, by 
the dominant energy condition it follows from (\ref{eq:2.4}) that $t^aT_{aBB'}
\lambda^B\bar\lambda^{B'}=t^aT_{aBB'}\mu^B\bar\mu^{B'}=0$, and hence that $T_{ab}
=f\lambda_A\lambda_B\bar\lambda_{A'}\bar\lambda_{B'}=\tilde f\mu_A\mu_B\bar
\mu_{A'}\bar\mu_{B'}$ for some non-negative functions $f$ and $\tilde f$ on 
$\Sigma$. These imply that if $\lambda^A$ and $\mu^A$ are not proportional 
with each other, then $f=\tilde f=0$. (Note that the \emph{linear 
independence} of $\lambda^A$ and $\mu^A$ means only that $\lambda^A\not=c
\mu^A$ for any non-zero complex \emph{constant} $c$. Thus, in principle, 
$\lambda^A$ and $\mu^A$ could be proportional to each other with some complex 
\emph{function} as a factor of proportionality.) 

Also, the condition $\kappa\,{\tt M}=\Lambda$ yields from (\ref{eq:2.4}) 
that ${\cal D}_{(AB}\lambda_{C)}={\cal D}_{(AB}\mu_{C)}=0$. Substituting these 
into (\ref{eq:2.2}), we obtain (\ref{eq:3.1}). First, we show that $\lambda^A$ 
and $\mu^A$ are not proportional to each other. Thus, suppose, on the contrary, 
that there exists an open set $U\subset\Sigma$ and a smooth complex-valued 
function $F:U\rightarrow\mathbb{C}$ such that ${\cal D}^A{}_{A'}\bar
\lambda^{A'}=F\lambda^A$, i.e. $\mu^A=-{\rm i}\frac{\sqrt{2}}{\alpha}F
\lambda^A$ on $U$. Then, on $U$, $\lambda_A\mu^A=0$. Taking its derivative, 
using equation (\ref{eq:3.1}), our assumption of the proportionality of 
$\mu^A$ and $\lambda^A$ and the eigenvalue equation for $\lambda^A$, we find 

\begin{eqnarray*}
0\!\!\!\!&=\!\!\!\!&D_e(\lambda_A\mu^A)=-\bigl({\cal D}_e\lambda^A\bigr)\mu_A
  +\bigl({\cal D}_e\mu^A\bigr)\lambda_A= \\
\!\!\!\!&=\!\!\!\!&{\rm i}\alpha\frac{\sqrt{2}}{3}P^{AA'}_e\mu_A\bar\mu_{A'}+
  {\rm i}\alpha\frac{\sqrt{2}}{3}P^{AA'}_e\lambda_A\bar\lambda_{A'}={\rm i}
  \frac{\sqrt{2}}{3\alpha}\Bigl(2F\bar F+\alpha^2\Bigr)P_e^{AA'}\lambda_A\bar
  \lambda_{A'};
\end{eqnarray*}
i.e. the projection $Z_e:=P^{AA'}_e\lambda_A\bar\lambda_{A'}$ would have to be 
vanishing on the open set $U$. Since $L^a:=\lambda^A\bar\lambda^{A'}$ is null 
and $\Sigma$ is spacelike, this would imply the vanishing of $\lambda^A$ on 
the open set $U$. However, B\"ar \cite{Bar} showed that the zero-set of any 
eigenspinor of a Riemannian Dirac operator on an $n$-dimensional manifold is 
at most $(n-2)$ dimensional. Also, by Aronszajn's theorem \cite{Aronszajn} 
if a function (or a set of functions) satisfying a second order elliptic 
p.d.e. is (are) vanishing on an open set, then it is (they are) vanishing 
everywhere. These results indicate that $\mu^A$ and $\lambda^A$ cannot be 
proportional. In fact, in Appendix \ref{sub-A.1} we show that the eigenspinors 
of $2{\cal D}^{AA'}{\cal D}_{A'B}$ satisfy the conditions of Aronszajn's theorem, 
and hence they cannot be vanishing on any open subset of $\Sigma$. Therefore, 
$\mu^A$ and $\lambda^A$ are not proportional with each other on any open 
subset of $\Sigma$, and hence, by the argumentation above, \emph{the energy 
momentum tensor is vanishing} on $\Sigma$. 

Next, let us evaluate the integrability conditions of (\ref{eq:3.1}). The 
action of the commutator of two Sen derivative operators on any spinor field 
is 

\begin{equation*}
\bigl({\cal D}_c{\cal D}_d-{\cal D}_d{\cal D}_c\bigr)\lambda^A=-R^A{}_{Bef}
P^e_cP^f_d\lambda^B-2\chi^e{}_{[c}t_{d]}{\cal D}_e\lambda^A,
\end{equation*}
where $\chi_{ab}$ is the extrinsic curvature of $\Sigma$ in the spacetime (see 
e.g. \cite{Sz12}). Thus, it is a straightforward calculation to derive the 
integrability conditions of (\ref{eq:3.1}), for which we obtain 

\begin{equation}
R^A{}_{Bef}P^e_cP^f_d\lambda^B=-\frac{1}{3}\Lambda\delta^A_{(E}\varepsilon_{F)B}
\varepsilon_{E'F'}P^e_cP^f_d\lambda^B, \label{eq:3.3}
\end{equation}
where we used the eigenvalue equation and the specific value $\alpha^2=
\frac{3}{4}\Lambda$ of the eigenvalue. However, the expression on the right 
is just the pull back to $\Sigma$ of the anti-self-dual part of the curvature 
of the constant positive curvature spacetime with scalar curvature $R=4
\Lambda$, i.e. of the de Sitter spacetime. Therefore, the anti-self-dual 
part of the curvature tensor of the spacetime has the structure $-R_{ABCC'DD'}
=\Psi_{ABCD}\varepsilon_{C'D'}-\frac{1}{3}\Lambda\varepsilon_{A(C}\varepsilon
_{D)B}\varepsilon_{C'D'}$, where, using Einstein's equations, we have already 
taken into account the vanishing of the trace-free part of the Ricci tensor. 
In terms of these quantities, the integrability condition (\ref{eq:3.3}) can 
be rewritten as 

\begin{equation}
0=\Bigl(-R_{ABef}+\frac{1}{3}\Lambda\varepsilon_{A(E}\varepsilon_{F)B}
\varepsilon_{E'F'}\Bigr)P^e_cP^f_d\lambda^B=\lambda^B\Psi_{ABEF}\varepsilon
_{E'F'}P^e_cP^f_d. \label{eq:3.4}
\end{equation}
Now we show that the whole Weyl spinor is also vanishing. 

For, let us introduce the complex null vectors $M^a:=\lambda^At^{A'}{}_B
\lambda^B$ and $\bar M^a:=\bar\lambda^{A'}t^A{}_{B'}\bar\lambda^{B'}$, which are 
tangent to $\Sigma$. These vectors together with $Z^a:=P^a_b\lambda^B\bar
\lambda^{B'}$ form a basis on an open dense subset of $\Sigma$ (namely on the 
subset where $\lambda^A$ is non-zero). Clearly, these satisfy the orthogonality 
and normalization conditions $Z_aM^a=M_aM^a=0$ and $Z_aZ^a=M_a\bar M^a=-(t_{AA'}
\lambda^A\bar\lambda^{A'})^2=:-\vert Z\vert^2$, respectively, and $Z^a=\frac{1}
{2}(\lambda^A\bar\lambda^{A'}-2\vert Z\vert^2t^A{}_{B'}\bar\lambda^{B'}t^{A'}{}_B
\lambda^B)$ holds. Then contracting (\ref{eq:3.4}), respectively, with $M^c
\bar M^d$, $Z^cM^d$ and $Z^c\bar M^d$, we obtain the vanishing of $\Psi_{ABCD}
\lambda^B\lambda^Ct^D{}_{D'}\bar\lambda^{D'}$, $\Psi_{ABCD}\lambda^B\lambda^C
\lambda^D$ and $\Psi_{ABCD}\lambda^Bt^C{}_{C'}\bar\lambda^{C'}t^D{}_{D'}\bar
\lambda^{D'}$. These imply that $\Psi_{ABCD}=\Psi\lambda_A\lambda_B\lambda_C
\lambda_D$ for some complex function $\Psi$ on $\Sigma$. Repeating this 
argumentation with the spinor field $\mu^A$, we obtain that $\Psi_{ABCD}=\tilde
\Psi\mu_A\mu_B\mu_C\mu_D$ for some $\tilde\Psi$. Since, however, the spinor 
fields $\lambda^A$ and $\mu^A$ are not proportional to each other on any open 
subset of $\Sigma$, this yields that $\Psi=\tilde\Psi=0$. Therefore, at the 
points of $\Sigma$ the curvature tensor of the spacetime is that of the de 
Sitter spacetime. 

Finally, let us foliate an open neighbourhood of $\Sigma$ in the spacetime by 
the spacelike hypersurfaces $\Sigma_t$ obtained by Lie dragging $\Sigma_0:=
\Sigma$ along its own unit timelike normal $t^a$. Then, the Bianchi identities, 
written in their 3+1 form with respect to this foliation by Friedrich 
\cite{Helmut}, yield that the spacetime curvature tensor is that of the de 
Sitter spacetime on the neighbouring leaves of the foliation. Hence the 
spacetime is locally isometric to the de Sitter spacetime with the 
cosmological constant $\Lambda$ in which the typical Cauchy surface is 
homeomorphic to $\Sigma$ (see \cite{Budic:etal}). 
\end{proof}

This result is slightly weaker than the analogous one in \cite{Sz12,Sz13} for 
the zero cosmological constant case, because there the topology of the 3-space 
has also been determined. In fact, there is a uniqueness for the topology of 
the holonomically trivial compact 3-spaces: they are necessarily tori (see 
\cite{KN1}, Theorem 4.2 in Chapter V, pp. 211-221), while, as we already noted 
above, the topology of 3-spaces with constant positive curvature is far from 
being fixed.  

In both the $\Lambda=0$ and $\Lambda>0$ cases the first eigenvalue of $2{\cal 
D}^{AA'}{\cal D}_{A'B}$ in the minimal total mass density configurations is 
built only from the curvature of ${\cal D}_e$, which has a uniform value in 
the 3-space. Thus, to give a more detailed characterization of the 3-spaces 
we should consider other eigenvalues. In the spatially flat, closed Bianchi I. 
model, the first eigenvalue of the Riemannian Dirac operator is zero, and the 
three parameters specifying the `size' of the flat 3-space are encoded in the 
higher eigenvalues (see \cite{Sz13}). Similarly, in the locally de Sitter 
case, the volume of the 3-space is expected to be recoverable from the higher 
eigenvalues. If this were indeed the case, then in terms of the curvature and 
the volume the \emph{topology} of the 3-space could also be characterized, at 
least partly, by the eigenvalues of $2{\cal D}^{AA'}{\cal D}_{A'B}$. To see 
this, observe that the local isometry $\pi:\tilde\Sigma_t\rightarrow\Sigma_t$ 
is a {\emph{universal covering}, because $\tilde\Sigma_t\approx S^3$ is simply 
connected. Since $\pi$ is smooth and $\tilde\Sigma_t$ is compact, it is a 
proper map, i.e. $\pi^{-1}(K)\subset\tilde\Sigma_t$ is compact for any compact 
$K\subset\Sigma_t$. Hence $\int_{\tilde\Sigma_t}\pi^*(\omega)_{abc}=\deg(\pi)\int
_{\Sigma_t}\omega_{abc}$ holds for any 3-form $\omega_{abc}$ on $\Sigma_t$, where 
$\deg(\pi)\in\mathbb{Z}$ is the degree of $\pi$ (see e.g. \cite{Spivak1}, p. 
275). Thus, in particular, ${\rm vol}(\tilde\Sigma_t)=\deg(\pi)\,{\rm vol}
(\Sigma_t)$. Actually, $\deg(\pi)$ is positive because $\pi$ is orientation 
preserving, and $\deg(\pi)$ can be interpreted as the number of how many times 
$\tilde\Sigma_t$ covers $\Sigma_t$. Therefore, since the curvature determines 
the volume of $\tilde\Sigma_t\approx S^3$, by ${\rm vol}(\tilde\Sigma_t)/{\rm 
vol}(\Sigma_t)=\deg(\pi)$ the curvature and the volume of the 3-space $\Sigma
_t$ determine $\deg(\pi)$, i.e. the number how many times $S^3$ covers the 
typical Cauchy surface $\Sigma$.

%%%%%%%%%%%%%%%%%%%%%%%%%%%%%%%%%%%%%%%%%%%%%%%%%%%%%%%%%%%%

\section{Total mass density in Bianchi A models}
\label{sec-4}

In Bianchi A cosmological spacetimes by the spatial homogeneity the eigenvalue 
equation for the Sen--Witten operator always has two independent eigenspinors 
whose \emph{components} in the \emph{left-invariant} frame are constant. In 
particular, on the $t={\rm const}$ spacelike hypersurfaces of the $k=1$ 
(closed) FRW spacetime these spinor fields are just the \emph{minimizer spinor 
fields} of ${\tt M}$. (See the discussions following equations (\ref{eq:A.2.7}) 
and (\ref{eq:A.2.10}) of Appendix \ref{sub-A.2}.) With this spinor field the 
integrand 

\begin{equation}
\frac{4}{\kappa}t^{AA'}t^{BB'}t^{CC'}{\cal D}_{(AB}\lambda_{C)}{\cal D}_{(A'B'}\bar
\lambda_{C')}+t^a\bigl(T_{ab}+\frac{\Lambda}{\kappa}g_{ab}\bigr)\lambda^B\bar
\lambda^{B'} \label{eq:4.1}
\end{equation}
in the definition (\ref{eq:2.4}) of ${\tt M}$ is \emph{constant} on the 
3-spaces $\Sigma_t$ of homogeneity. Thus, even though the integral of 
(\ref{eq:4.1}) does not exist on non-compact 3-spaces of homogeneity, this 
expression can be used to \emph{define} the total mass density by fixing the 
scale of the spinor field by $t_{AA'}\lambda^A\bar\lambda^{A'}=1$ an any given 
point of $\Sigma_t$. By equation (\ref{eq:A.2.6c}) of Appendix \ref{sub-A.2} 
the 3-surface twistor term of (\ref{eq:4.1}) in general is non-trivial, and 
could be interpreted as the contribution of the gravitational `field' to the 
total mass density of the matter+gravity system.

%%%%%%%%%%%%%%%%%%%%%%%%%%%%%%%%%%%%%%%%%%%%%%%%%%%%%%%%%%%

\section{Appendix}
\label{sec-A}

\subsection{An application of Aronszajn's theorem to eigenspinors}
\label{sub-A.1}

In this appendix we prove the following statement: 

\begin{theorem}
If $\lambda^A$ is an eigenspinor of $2{\cal D}^{AA'}{\cal D}_{A'B}$ on 
$\Sigma$, then it cannot have any zero of infinite order in the 1--mean (for 
the definition see below). 
\end{theorem}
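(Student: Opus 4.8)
The plan is to recognise the eigenvalue equation $2{\cal D}^{AA'}{\cal D}_{A'B}\lambda^B=\alpha^2\lambda^A$ as a second-order elliptic system of the Laplace type on $\Sigma$, and then to apply Aronszajn's strong unique continuation theorem \cite{Aronszajn} in its $1$-mean form. First I would determine the principal part of the operator $2{\cal D}^{AA'}{\cal D}_{A'B}$. A short rearrangement of the Sen--Witten identity (\ref{eq:2.1}) (equivalently, a direct Weitzenb\"ock-type computation, cf. \cite{Sz12}) exhibits $2{\cal D}^{AA'}{\cal D}_{A'B}\lambda^B$ as the connection Laplacian $h^{ef}D_eD_f\lambda^A$ plus terms that are algebraic in $\lambda^A$, built from the extrinsic curvature $\chi_{ab}$ and the spacetime curvature along $\Sigma$. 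Equivalently, the principal symbol at a real covector $\xi_e$ tangent to $\Sigma$ is $\xi^{AA'}\xi_{A'B}$, and since for a real (i.e. Hermitian) $\xi_{AA'}$ the symmetric part $\xi_{(A}{}^{A'}\xi_{B)A'}$ vanishes, this symbol is proportional to $(h^{ef}\xi_e\xi_f)\,\delta^A_B$ --- a nonzero scalar multiple of the identity, since a nonzero covector tangent to the spacelike $\Sigma$ has $h^{ef}\xi_e\xi_f\neq0$. Hence $2{\cal D}^{AA'}{\cal D}_{A'B}$ is elliptic with \emph{diagonal} principal part, i.e. of Laplace type.

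Next I would trivialise. On a coordinate chart $U\subset\Sigma$ carrying a smooth spinor dyad, the four real components $u=(u^1,\dots,u^4)$ of $\lambda^A$ satisfy
\[
h^{ef}(x)\,\partial_e\partial_f u^I+b^I{}_{eJ}(x)\,\partial^e u^J+c^I{}_J(x)\,u^J=0,\qquad I=1,\dots,4,
\]
with top coefficient $h^{ef}(x)\,\delta^I_J$ --- uniformly elliptic and Lipschitz on any $V$ with $\overline V\subset U$ --- and with $b,c$ smooth in $h_{ab}$ and its first derivatives, in $\chi_{ab}$ and its first covariant derivative, in the spacetime curvature restricted to $\Sigma$, and in $\alpha^2$, hence bounded on $V$. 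Therefore $|h^{ef}\partial_e\partial_f u|\le C(|u|+|\partial u|)$ on $V$, which is exactly the hypothesis of Aronszajn's theorem for a second-order elliptic inequality (in the version valid for vector-valued solutions whose Laplacian is pointwise controlled by the solution and its gradient). I would then quote that theorem in the needed form: if such a $u$ has a zero of infinite order in the $1$-mean at a point $p$ --- i.e. $r^{-N}\int_{B_r(p)}|u|\,{\rm d}\Sigma\to0$ as $r\to0$ for every $N\in\mathbb{N}$ --- then $u$ vanishes on a neighbourhood of $p$, and hence, $\Sigma$ being connected, everywhere. Since an eigenspinor is by definition not identically zero, $\lambda^A$ admits no such zero, which is the claim. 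As an immediate corollary (the statement actually used in the proof of Theorem \ref{th:3.1}), $\lambda^A$ cannot vanish on any non-empty open subset of $\Sigma$.

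The step I expect to require genuine care is the symbol computation above --- the vanishing of $\xi_{(A}{}^{A'}\xi_{B)A'}$ for Hermitian $\xi_{AA'}$, which makes the second-order part \emph{diagonal} and so lets Aronszajn's theorem be used in its scalar-principal-part form rather than in a heavier general elliptic-system version; as an alternative one may observe that $2{\cal D}^{AA'}{\cal D}_{A'B}$ differs only by lower-order terms from the square of the Riemannian Dirac operator of $(\Sigma,h_{ab})$, for which Lichnerowicz's formula yields the Laplace-type structure at once. Everything else --- trivialising the bundle, extracting the differential inequality with bounded coefficients, and invoking the cited theorem --- is routine; in particular the explicit form of $b$ and $c$ is not needed beyond their smoothness.
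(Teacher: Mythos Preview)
Your proposal is correct and follows essentially the same route as the paper's proof. The only cosmetic difference is that the paper establishes the Laplace-type structure directly from the explicit Lichnerowicz identity (\ref{eq:A.1.1}) of \cite{Sz12} rather than by a principal-symbol argument (which you in any case mention as an alternative); both then trivialise in a chart with compact closure, derive Aronszajn's differential inequality with bounded coefficients, and propagate the local conclusion to all of $\Sigma$ by a covering/connectedness argument.
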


\medskip
\noindent
Let $o\in\Sigma$, $r>0$ and let $B(o,r)$ denote the set of points of $\Sigma$ 
whose geodesic distance from $o$ is less than $r$. Following Aronszajn 
\cite{Aronszajn}, we say that $\lambda^A$ has a zero of order $\omega$ at $o$ 
in the $p$--mean if $\int_{B(o,r)}\vert\lambda^A\vert^p{\rm d}\Sigma=O(r
^{\omega p+3})$, where $\vert\lambda^A\vert^2:=\sqrt{2}t_{AA'}\lambda^A\bar
\lambda^{A'}$, the square of the pointwise norm of the spinor field. The zero 
is said to be of infinite order if it is of order $\omega$ for all $\omega>0$. 
Clearly, if $\lambda^A$ is vanishing on an open subset $O\subset\Sigma$, then 
any $o\in O$ is a zero of infinite order. Hence, the theorem above guarantees 
that no eigenspinor of $2{\cal D}^{AA'}{\cal D}_{A'B}$ can be vanishing on any 
open subset of $\Sigma$.

\begin{proof}
In \cite{Sz12} we derived a Lichnerowicz type identity for any spinor field 
(equation (2.11) of \cite{Sz12}), which, by the expression for the constraint 
parts of the spacetime Einstein tensor (equations (2.9) and (2.10) of 
\cite{Sz12}) and for the Sen derivative operator in terms of the intrinsic 
Levi-Civita derivative and the extrinsic curvature (equations (2.2) and (2.4) 
of \cite{Sz12}), takes the form 

\begin{equation}
2{\cal D}^{AA'}{\cal D}_{A'B}\lambda^B=D_eD^e\lambda^A+\bigl(D^{AA'}\chi\bigr)
t_{A'B}\lambda^B+\frac{1}{4}\bigl({\cal R}+\chi^2\bigr)\lambda^A. 
\label{eq:A.1.1}
\end{equation}
Thus if $\lambda^A$ is an eigenspinor of $2{\cal D}^{AA'}{\cal D}_{A'B}$ with 
the eigenvalue $\alpha^2$, then $\lambda^A$ satisfies 

\begin{equation}
D_eD^e\lambda^A=\bigl(D^A{}_{A'}\chi\bigr)t^{A'}{}_B\lambda^B+\frac{1}{4}\bigl(
4\alpha^2-{\cal R}-\chi^2\bigr)\lambda^A. \label{eq:A.1.2}
\end{equation}
Let $U\subset\Sigma$ be an open set with \emph{compact} closure which is the 
domain of the local coordinate system $\{x^\alpha\}$, and fix a normalized dual 
spin frame field $\{\varepsilon^A_{\uA},\varepsilon^{\uA}_A\}$ on $U$. Then the 
coordinate Laplacian of the \emph{components} of the spinor field has the 
structure 

\begin{equation*}
h^{\alpha\beta}\partial_\alpha\partial_\beta\lambda^{\uA}=\bigl(D_eD^e\lambda^A
\bigr)\varepsilon^{\uA}_A+E^{\alpha{\uA}}{}_{\uB}\partial_\alpha\lambda^{\uB}
+E^{\uA}{}_{\uB}\lambda^{\uB},
\end{equation*}
where the coefficients $E^{\alpha{\uA}}{}_{\uB}$ and $E^{\uA}{}_{\uB}$ are built 
from the Christoffel symbols and the components of the connection 1-form on 
the spinor bundle in the spinor basis above. Taking into account 
(\ref{eq:A.1.2}) and writing its right hand side as $F^A{}_B\lambda^B$, we 
obtain 

\begin{equation}
\vert h^{\alpha\beta}\partial_\alpha\partial_\beta\lambda^{\uA}\vert\leq\vert 
E^{\alpha{\uA}}{}_{\uB}\partial_\alpha\lambda^{\uB}\vert+\vert E^{\uA}{}_{\uB}
\lambda^{\uB}\vert+\vert F^{\uA}{}_{\uB}\lambda^{\uB}\vert, \label{eq:A.1.4}
\end{equation}
where $\vert\,.\,\vert$ means pointwise \emph{absolute value} of spinor/tensor 
\emph{components}. Next, recall that on any finite dimensional complex vector 
space, say $\mathbb{C}^n$, any two norms $\Vert x\Vert_p:=(\sum_{k=1}^n\vert x
^k\vert^p)^{1/p}$, $x=\{x^k\}\in\mathbb{C}^n$, are equivalent, where $1\leq p<
\infty$. In particular, $\Vert x\Vert_1\leq n\Vert x\Vert_2$ holds. Thus, if 
$x,y,z\in\mathbb{C}^n$, then with the notation $w:=(x,y,z)\in\mathbb{C}^{3n}$ 
we have that $\Vert x\Vert_1+\Vert y\Vert_1+\Vert z\Vert_1=\Vert w\Vert_1
\leq3n\Vert w\Vert_2$. Applying this inequality to the pointwise norm on the 
right of (\ref{eq:A.1.4}), we obtain 

\begin{equation*}
\vert h^{\alpha\beta}\partial_\alpha\partial_\beta\lambda^{\uA}\vert^2\leq 36\Bigl(
\vert E^{\alpha{\uA}}{}_{\uB}\partial_\alpha\lambda^{\uB}\vert^2+\vert E^{\uA}{}_{\uB}
\lambda^{\uB}\vert^2+\vert F^{\uA}{}_{\uB}\lambda^{\uB}\vert^2\Bigr).
\end{equation*}
Since the geometry is smooth and $U$ is of compact closure, there exists a 
positive constant $C$ such that for both ${\uA}=0,1$ 

\begin{equation*}
\vert h^{\alpha\beta}\partial_\alpha\partial_\beta\lambda^{\uA}\vert^2\leq C\,\delta
_{{\uB}{\uB}'}\Bigl(-h^{\alpha\beta}(\partial_\alpha\lambda^{\uB})(\partial_\beta
\bar\lambda^{{\uB}'})+\lambda^{\uB}\bar\lambda^{{\uB}'}\Bigr) 
\end{equation*}
holds, where $\delta_{{\uB}{\uB}'}$ is the Kronecker delta. However, this is 
just the condition of Aronszajn's theorem \cite{Aronszajn} (see Remark 3 in 
Aronszajn's paper), which guarantees the vanishing of $\lambda^A$ on $U$ if 
it has a zero of infinite order somewhere in $U$. Finally, covering $\Sigma$ 
by such coordinate domains $U$ and applying the above result to the 
overlapping domains, we find that $\lambda^A$ is vanishing on the whole of 
$\Sigma$. 
\end{proof} 

%%%%%%%%%%%%%%%%%%%%%%%%%%%%%%%%%%%%%%%%%%%%%%%%%%%%%%%%%

\subsection{The ${\cal D}$ and ${\cal T}$ operators in Bianchi 3-spaces}
\label{sub-A.2}

\subsubsection{The geometry of the Bianchi cosmological spacetimes}
\label{sub-A.2.1}

Let the spacetime be a homogeneous Bianchi cosmological spacetime, foliated 
by the spacelike hypersurfaces $\Sigma_t$ which are the transitivity surfaces 
of the isometry group (see e.g. \cite{Wald}). Let $\{e^a_{\bi},\vartheta^{\bi}
_a\}$, ${\bi}=1,2,3$, be a globally defined $h_{ab}$-orthonormal dual frame 
field on $\Sigma_0$, which is \emph{left invariant} with respect to the 
(simply transitive) action of the isometry group $G$ on $\Sigma_0$. Then the 
structure constants of $G$ can be given in this basis, too, by $c^{\bi}_{\bj\bk}
:=[e_{\bj},e_{\bk}]^a\vartheta^{\bi}_a$, and $D_{[a}\vartheta^{\bi}_{b]}=-\frac{1}
{2}c^{\bi}_{\bj\bk}\vartheta^{\bj}_{[a}\vartheta^{\bk}_{b]}$ also holds. 
Following \cite{Wald}, we parametrize the structure constants as $c^{\bi}
_{\bj\bk}=M^{\bi\bm}\varepsilon_{\bm\bj\bk}+\delta^{\bi}_{[{\bj}}A_{\bk]}$, where 
$M^{\bi\bj}$ is a symmetric real matrix, $M^{\bi\bj}A_{\bj}=0$ holds and 
$\varepsilon_{\bk\bl\bm}$ is the alternating Levi-Civita symbol. The group $G$ 
is said to belong to Bianchi class A if $A_{\bi}=0$, otherwise it belongs to 
class B. Extending the dual frame field $\{e^a_{\bi},\vartheta^{\bi}_a\}$ along 
the timelike geodesic normals to the other leaves $\Sigma_t$ we obtain a 
globally defined frame field which provides a convenient `background' to 
describe the dynamics, but, in general, it will \emph{not} be 
$h_{ab}$-orthonormal on the leaves other than $\Sigma_0$. In this basis $h_{ab}
=h_{\bi\bj}\vartheta^{\bi}_a\vartheta^{\bj}_b$, where the components $h_{\bi\bj}$ 
are functions only of the time coordinate $t$. 

Thus, let us write the vectors of an $h_{ab}$-orthonormal basis as $E^a_{\bi}:=
e^a_{\bj}\Phi^{\bj}{}_{\bi}$, and similarly $\theta^{\bi}_a:=\vartheta^{\bj}_a
\Phi_{\bj}{}^{\bi}$, where the matrices $\Phi^{\bi}{}_{\bj}$, $\Phi_{\bj}{}^{\bi}$ 
depend only on $t$, have positive determinant, satisfy $\Phi^{\bi}{}_{\bj}
\Phi_{\bk}{}^{\bj}=\delta^{\bi}_{\bk}$ and $h_{\bi\bj}=\Phi_{\bi}{}^{\bk}\Phi_{\bj}
{}^{\bl}\eta_{\bk\bl}$, and at $t=0$ they reduce to the unit matrix $\delta^{\bi}
_{\bj}$. Clearly, $\{E^a_{\bi},\theta^{\bi}_a\}$ is also a left-invariant dual 
basis, but $E^a_{\bi}=h^{ab}\theta^{\bj}_b\eta_{\bj\bi}$ also holds, where boldface 
(i.e. frame name) index lowering and raising are defined by $\eta_{\bi\bj}:=-
\delta_{\bi\bj}$ and its inverse. (Note, however, that $h^{\bi\bj}$ denotes the 
\emph{inverse} of the matrix $h_{\bi\bj}$.) The structure `constants' of this 
frame are $C^{\bi}_{\bj\bk}:=[E_{\bj},E_{\bk}]^a\theta_a^{\bi}=c^{\bl}_{\bm\bn}\Phi
_{\bl}{}^{\bi}\Phi^{\bm}{}_{\bj}\Phi^{\bn}{}_{\bk}$, which depend on the time 
coordinate. It is a simple calculation to show that these structure constants 
determine completely the Ricci rotation coefficients of the spatial 
Levi-Civita connection in the frame $\{E^a_{\bi},\theta^{\bi}_a\}$: 

\begin{equation}
\gamma^{\bi}_{\bk\bj}:=E^e_{\bk}\gamma^{\bi}_{e{\bj}}:=\theta^{\bi}_aE^e_{\bk}\bigl(
D_eE^a_{\bj}\bigr)=\frac{1}{2}\Bigl(C^{\bi}_{\bk\bj}+\eta^{\bi\bm}C^{\bn}_{\bm\bk}
\eta_{\bn\bj}+\eta^{\bi\bm}C^{\bn}_{\bm\bj}\eta_{\bn\bk}\Bigr). \label{eq:A.2.1}
\end{equation}
Clearly, these are constant on each $\Sigma_t$, and, with the parametrization 
of $c^{\bi}_{\bj\bk}$ above, we have that 

\begin{equation}
\frac{1}{2}\gamma^{\bi\bj}_{\bk}\varepsilon_{\bi\bj\bl}=-\frac{1}{\sqrt{\vert h
\vert}}M^{\bm\bn}\Phi_{\bm}{}^{\bi}\Phi_{\bn}{}^{\bj}\eta_{\bi\bk}\eta_{\bj\bl}+
\frac{1}{2}\frac{1}{\sqrt{\vert h\vert}}\eta_{\bk\bl}M^{\bi\bj}h_{\bi\bj}-
\frac{1}{2}A_{\bi}\Phi^{\bi}{}_{\bj}\varepsilon^{\bj}{}_{\bk\bl}, \label{eq:A.2.2}
\end{equation}
where $\vert h\vert:=\vert \det(h_{\bi\bj})\vert=(\det(\Phi^{\bi}{}_{\bj}))^{-2}$. 
Equation (\ref{eq:A.2.2}) implies, in particular, that the frame field 
$\{E^a_{\bi}\}$ satisfies Nester's special orthonormal frame gauge condition on 
$\Sigma_t$ \cite{Nester}: Nester requires $q:=\gamma^{\bi\bj}_{\bk}\varepsilon
_{\bi\bj}{}^{\bk}$ to be constant and $q_{\bi}\theta^{\bi}_a$ to be closed, where 
$q_{\bj}:=\gamma^{\bi}_{\bi\bj}$. By $q=M^{\bi\bj}h_{\bi\bj}/\sqrt{\vert h\vert}$ 
the first condition is clearly satisfied, and by $q_{\bi}=A_{\bj}\Phi^{\bj}{}
_{\bi}$, $M^{\bi\bj}A_{\bj}=0$ and $D_{[a}\theta^{\bi}_{b]}=-\frac{1}{2}C^{\bi}
_{\bj\bk}\theta^{\bj}_{[a}\theta^{\bk}_{b]}$ the 1-form $q_{\bi}\theta^{\bi}_a$ is, 
indeed, closed. Since the Ricci rotation coefficients are constants in the 
left-invariant frame, it is straightforward to compute the components of the 
curvature tensor and its scalar curvature. For the latter, we obtain 

\begin{eqnarray}
{\cal R}\!\!\!\!&=\!\!\!\!&-\frac{1}{2}\Bigl(M^{\bk\bl}M^{\bm\bn}\eta_{\bk\bm}
  \eta_{\bl\bn}\eta_{\bi\bj}-2\eta_{\bi\bk}M^{\bk\bm}\eta_{\bm\bn}M^{\bn\bl}\eta
  _{\bl\bj}+2M^{\bm\bn}\eta_{\bm\bn}M^{\bk\bl}\eta_{\bk\bi}\eta_{\bl\bj}-\nonumber \\
\!\!\!\!&-\!\!\!\!&\bigl(M^{\bk\bl}\eta_{\bk\bl}\bigr)^2\eta_{\bi\bj}\Bigr)
  h^{\bi\bj}-\frac{1}{2\vert h\vert}M^{\bk\bi}M^{\bl\bj}h_{\bk\bl}h_{\bi\bj}+
  \frac{3}{2}A_{\bi}A_{\bj}h^{\bi\bj},  \label{eq:A.2.3}
\end{eqnarray}
which is constant on $\Sigma_t$, but depends on the time coordinate $t$ since 
$h_{\bi\bj}$ does.

\subsubsection{The ${\cal D}$ and ${\cal T}$ operators}
\label{sub-A.2.2}

If $\{\varepsilon^A_{\uA},\varepsilon^{\uA}_A\}$, ${\uA}=0,1$, is a normalized 
spinor basis associated with the dual frame field $\{E^a_{\bi},\theta^{\bi}_a
\}$, then the spinor connection 1-forms of the intrinsic Levi-Civita covariant 
derivative on the spinor bundle are given by 

\begin{equation}
\gamma^{\uA}_{e{\uB}}=-\frac{\rm i}{2\sqrt{2}}\gamma^{\bi\bj}_e\varepsilon_{\bi\bj}
{}^{\bk}\sigma^{\uA}_{{\bk}{\uB}}, \label{eq:A.2.4}
\end{equation}
where $\sigma^{\uA}_{{\bk}{\uB}}$ are the standard $SU(2)$ Pauli matrices 
(including the factor $1/\sqrt{2}$). Then, by contracting it with $E^e_{\bi}
\sigma^{\bi}_{\uC\uD}$, it is straightforward to compute its various irreducible 
parts. We obtain 

\begin{equation}
\gamma^{\uD}_{\uD\uA\uB}=\frac{1}{2}A_{\bi}\Phi^{\bi}{}_{\bj}\sigma^{\bj}_{\uA\uB}-
\frac{\rm i}{4\sqrt{2\vert h\vert}}M^{\bi\bj}h_{\bi\bj}\varepsilon_{\uA\uB}, 
\hskip 15pt
\gamma^{\uA}_{(\uB\uC\uD)}=\frac{\rm i}{2\sqrt{2}\vert h\vert}M^{\bi\bj}\Phi_{\bi}
{}^{\bk}\Phi_{\bj}{}^{\bl}\eta_{\bl\bm}\sigma^{\uA}_{{\bk}({\uB}}\sigma^{\bm}_{\uC\uD)}; 
\label{eq:A.2.5} 
\end{equation}
i.e. the irreducible parts of the first are proportional to $A_{\bi}$ and the 
\emph{trace} of $M^{\bi\bj}$ with respect to $h_{\bi\bj}$, respectively, while 
the second to the $h_{\bi\bj}$--\emph{trace-free part} of $M^{\bi\bj}$. 

By (\ref{eq:A.2.5}) the Riemannian Dirac and 3-surface twistor operators, 
$D_{AB}\lambda^B$ and $T(\lambda)_{ABC}$ $:=D_{(AB}\lambda_{C)}$ on $\Sigma_t$, 
respectively, take the form 

\begin{eqnarray}
\varepsilon^A_{\uA}D_{AB}\lambda^B\!\!\!\!&=\!\!\!\!&\sigma^{\bi}_{\uA\uB}
 \Bigl(E^e_{\bi}\bigl(\partial_e\lambda^{\uB}\bigr)+\frac{1}{2}A_{\bj}\Phi^{\bj}
 {}_{\bi}\lambda^{\uB}\Bigr)+\frac{\rm i}{4\sqrt{2}}\frac{1}{\sqrt{\vert h
 \vert}}M^{\bi\bj}h_{\bi\bj}\lambda_{\uA}, \label{eq:A.2.6a}\\
T(\lambda)_{\uA\uB\uC}\!\!\!\!&=\!\!\!\!&E^e_{\bi}\bigl(\partial_e\lambda
 _{({\uA}}\bigr)\sigma^{\bi}_{{\uB\uC})}-\frac{\rm i}{2\sqrt{2}\sqrt{\vert h\vert}}
 M^{\bm\bn}\Phi_{\bm}{}^{\bi}\Phi_{\bn}{}^{\bj}\eta_{\bj\bk}\sigma^{\uD}_{{\bi}({\uA}}
 \sigma^{\bk}_{{\uB\uC})}\lambda_{\uD}. \label{eq:A.2.6b}
\end{eqnarray}
(\ref{eq:A.2.6a}) shows that spinor fields with \emph{constant components} in 
the left-invariant frame field are eigenspinors of the Riemannian Dirac 
operator precisely when $A_{\bi}=0$, i.e. when the group $G$ belongs to Bianchi 
class A, in which case the eigenvalue is $\beta=-M^{\bi\bj}h_{\bi\bj}/
(4\sqrt{\vert h\vert})$. The action of the Riemannian 3-surface twistor 
operator on \emph{such} a spinor field is 

\begin{equation}
T(\lambda)^{\uA\uB\uC}=-\frac{\rm i}{2\sqrt{2}\sqrt{\vert h\vert}}\Bigl(
\widehat{M}^{\bi\bj}\Phi_{\bi}{}^{\bm}\Phi_{\bj}{}^{\bn}+\frac{1}{3}M^{\bi\bj}
\eta_{\bi\bj}\widehat{h}^{\bm\bn}\Bigr)\sigma^{({\uA\uB}}_{\bm}\sigma^{{\uC})
\uD}_{\bn}\lambda_{\uD}, \label{eq:A.2.6c}
\end{equation}
where $\widehat{M}^{\bi\bj}:=M^{\bi\bj}-\frac{1}{3}M^{\bk\bl}\eta_{\bk\bl}\eta
^{\bi\bj}$ and $\widehat{h}^{\bi\bj}:=h^{\bi\bj}-\frac{1}{3}h^{\bk\bl}\eta_{\bk\bl}
\eta^{\bi\bj}$, the trace-free part of $M^{\bi\bj}$ and $h^{\bi\bj}$, 
respectively, with respect to $\eta_{\bi\bj}$. This is vanishing if $M^{\bi\bj}
=0$ (Bianchi I. models), or if $\widehat{M}^{\bi\bj}=0$ and the evolution of 
the spatial metric yields that $h_{\bi\bj}=S^2\eta_{\bi\bj}$ for some positive 
function $S=S(t)$ (closed FRW spacetimes in Bianchi IX). 

The action of the square of the Riemannian Dirac operator on the spinor field 
whose components $\lambda^{\uA}$ in the left-invariant frame are constant is 

\begin{equation}
\varepsilon^{\uA}_AD^{AB}D_{BC}\lambda^C=\frac{1}{8}\Bigl(\bigl(q_{\bi}q_{\bj}
\eta^{\bi\bj}+\frac{1}{4}q^2\bigr)\delta^{\uA}_{\uB}-{\rm i}\sqrt{2}qq^{\bi}
\sigma^{\uA}_{{\bi}{\uB}}\Bigr)\lambda^{\uB}, \label{eq:A.2.7}
\end{equation}
where, for the sake of brevity, we used the quantities $q$ and $q_{\bi}$ 
introduced in connection with Nester's gauge condition above. Therefore, 
\emph{such} a spinor field can be an eigenspinor of $2D^{AB}D_{BC}$ with the 
eigenvalue $\beta^2$ precisely when $qq_{\bi}=0$, and the eigenvalue is $\beta
^2=\frac{1}{4}q_{\bi}q_{\bj}\eta^{\bi\bj}+\frac{1}{16}q^2$. In particular, if 
$A_{\bi}=0$, $M^{\bi\bj}=\frac{1}{3}M\eta^{\bi\bj}$ and $h_{\bi\bj}=S^2\eta_{\bi\bj}$ 
(e.g. for the closed FRW spacetimes), then by (\ref{eq:A.2.3}) this yields 
that $\beta^2=\frac{3}{8}{\cal R}$, which is known to be just the 
\emph{smallest} eigenvalue of $2D^{AB}D_{BC}$ on the metric 3-spheres. 

If the symmetry group $G$ belongs to Bianchi class B, then no eigenspinor of 
the Riemannian Dirac operator has constant components in the left invariant 
frame field. Nevertheless, (\ref{eq:A.2.6a}) motivates us to consider the 
weaker condition 

\begin{equation}
E^e_{\bi}(\partial_e\lambda^{\uA})+\frac{\rm i}{2}q_{\bi}\lambda^{\uA}=0.
\label{eq:A.2.8}
\end{equation}
It is a straightforward calculation to show that its integrability conditions 
are satisfied \emph{identically}, and hence this equation \emph{locally} 
admits precisely two linearly independent solutions, which are specified 
completely by their own value at an arbitrary point of $\Sigma_t$. Let these 
two solutions be $\lambda^{\uA}$ and $\mu^{\uA}$, and form their symplectic 
scalar product $\omega:=\lambda^{\uA}\mu^{\uB}\varepsilon_{\uA\uB}$. Then 
(\ref{eq:A.2.8}) yields that $q_{\bi}\theta^{\bi}_e={\rm i}\partial_e(\ln
\omega)$, i.e., if (\ref{eq:A.2.8}) admits \emph{global} solutions, then the 
combination $q_{\bi}\theta^{\bi}_e$ of the left invariant orthonormal 1-form 
basis would be \emph{exact}. Then this global solution would be an eigenspinor 
of the Riemannian Dirac operator also with the eigenvalue $\beta=-M^{\bi\bj}
h_{\bi\bj}/(4\sqrt{\vert h\vert})$. 

The unitary spinor form of the derivative operator of the Sen connection is 
well known to be ${\cal D}_{AB}\lambda_C=D_{AB}\lambda_C-\frac{1}{\sqrt{2}}\chi
_{ABCD}\lambda^D$, where $\chi_{ABCD}:=2t_B{}^{A'}t_D{}^{C'}\chi_{AA'CC'}$. Hence, 
if the components $\chi_{\bi\bj}$ of the extrinsic curvature are defined by 
$\chi_{ab}=:\chi_{\bi\bj}\vartheta^{\bi}_a\vartheta^{\bj}_b$, which are also 
constant on $\Sigma_t$ but in general depend on $t$, then the Sen--Witten and 
the 3-surface twistor operators, respectively, take the form 

\begin{eqnarray}
\varepsilon^A_{\uA}{\cal D}_{AB}\lambda^B\!\!\!\!&=\!\!\!\!&\varepsilon^A
  _{\uA}D_{AB}\lambda^B+\frac{1}{2\sqrt{2}}\chi\lambda_{\uA}, 
  \label{eq:A.2.9a} \\
{\cal T}(\lambda)_{\uA\uB\uC}\!\!\!\!&=\!\!\!\!&T(\lambda)_{\uA\uB\uC}-\frac{1}
  {\sqrt{2}}\chi_{\bi\bj}\Phi^{\bi}{}_{\bk}\Phi^{\bj}{}_{\bl}\sigma^{\bk}_{({\uA\uB}}
  \sigma^{\bl}_{{\uC}){\uD}}\lambda^{\uD}. \label{eq:A.2.9b}
\end{eqnarray}
Thus, if the extrinsic curvature is a pure trace, $\chi_{ab}=\frac{1}{3}\chi 
h_{ab}$, i.e. $\chi_{\bk\bl}\Phi^{\bk}{}_{\bi}\Phi^{\bl}{}_{\bj}=\frac{1}{3}\chi
\eta_{\bi\bj}$ (as in the FRW spacetimes), then by (\ref{eq:A.2.9b}) the 
3-surface twistor and the Riemannian 3-surface twistor operators coincide. 
Since in general $2{\cal D}^{AA'}{\cal D}_{A'B}\lambda^A=2D^{AA'}D_{A'B}\lambda^B
+\lambda^Bt_{BB'}(D^{B'A}\chi)+\frac{1}{4}\chi^2\lambda^A$, in spatially 
homogeneous spacetimes, where $\chi$ is spatially constant, we have for any 
eigenspinor $\lambda^A$ of the square of the Riemannian Dirac operator with 
eigenvalue $\beta^2$ that 

\begin{equation}
2{\cal D}^{AA'}{\cal D}_{A'B}\lambda^B=2D^{AA'}D_{A'B}\lambda^B+\frac{1}{4}
\chi^2\lambda^A=\bigl(\beta^2+\frac{1}{4}\chi^2\bigr)\lambda^A. 
\label{eq:A.2.10}
\end{equation}
Thus, $\lambda^A$ will be an eigenspinor of the square of the Sen--Witten 
operator, too, with the eigenvalue $\beta^2+\frac{1}{4}\chi^2$. In particular, 
in the closed FRW spacetimes the first eigenvalue of $2{\cal D}^{AA'}{\cal D}
_{A'B}$ is $\frac{3}{8}{\cal R}+\frac{1}{4}\chi^2$, and the corresponding 
eigenspinors are just the ones whose components in the left-invariant frame 
field are constant.

%%%%%%%%%%%%%%%%%%%%%%%%%%%%%%%%%%%%%%%%%

\end{document}